\documentclass[11pt]{article}
\usepackage{epsf}
\usepackage{latexsym}
\usepackage{amssymb,amsmath,amsfonts}

\def \vs {\vspace}
\def \ni {\noindent}

\def \md {\models}
\def \Sg {\Sigma}
\def \sg {\sigma}
\def \wed {\wedge}
\def \se {\subseteq}
\def \ep {\epsilon}
\def \hs {\hspace}

\newtheorem{theorem}{Theorem}[section]

\newtheorem{lemma}{Lemma}[section]

\newtheorem{example}{Example}[section]

\newenvironment{proof}{\noindent{\bf Proof.}}{\bigskip}


\newlength{\figwidth}
\setlength{\figwidth}{\textwidth} \addtolength{\figwidth}{-5mm}

\bibliographystyle{alpha}

\begin{document}

\title{An Incremental Knowledge Compilation in First Order Logic}

\author{Manoj K. Raut\\ School of
 Computing Sciences\\ VIT University, Vellore
 \\ Email: {\tt manoj.raut@gmail.com}}

\date{}
\maketitle

\begin{abstract}
An algorithm to compute the set of prime implicates of a
quantifier-free clausal formula $X$ in first order logic had been
presented in earlier work. As the knowledge base $X$ is dynamic, new
clauses are added to the old knowledge base. In this paper an
incremental algorithm is presented to compute the prime implicates of $X$
and a clause $C$ from $\pi(X)\cup C$. The
correctness of the algorithm is also proved.
\end{abstract}

\noindent{\it Keywords:} Knowledge Compilation, Prime Implicates

\section{Introduction}

 Propositional entailment is a central issue in artificial
intelligence due to its high complexity. Determining the logical
entailment of a given query from a knowledge base is intractable in
general \cite{Cook} as all known algorithms run in time exponential in
the size of the given knowledge base. To overcome such computational
intractability, the propositional entailment problem is split into two
phases such as {\em off-line} and {\em on-line}. In the off-line phase
the original knowledge base $X$ is transformed into another knowledge
base $X^{'}$ and the queries are answered in the on-line phase from
the new knowledge base in polynomial time in the size of $X^{'}$. In
such type of compilation most of the computational overhead shifted
into the off-line phase is amortized over on-line query
answering. The off-line computation is known as {\em knowledge
compilation}.

Several algorithms in knowledge compilation have been suggested so
far, see for example, \cite{Coudert1, Jackson1, Kleer1, Kleer2, Kean1,
Ngair, Reiter, Shiny, Slagle, Strzemecki, Tison, delVal1}. In these
approaches of knowledge compilation, a knowledge base $KB$ is compiled
off-line into another {\it equivalent} knowledge base $\pi(KB)$, i.e,
the set of prime implicates of $KB$, so that queries can be answered
from $\pi(KB)$ in polynomial time. Most of the work in knowledge
compilation have been restricted to propositional knowledge bases in
spite of the greater expressing capacity of a first order knowledge
base. Due to lack of expressive power in propositional logic, first
order logic is required to represent knowledge in many problems.  An
algorithm to compute the set of prime implicates of a first order
formula in SCNF had been proposed in \cite{Manoj}.

As a knowledge base is not static, new clauses are added to the
existing knowledge base. It will be inefficient to compute the set of
prime implicates of the new knowledge base from the scratch. On the
other hand properties of the old $\pi(KB)$ can be utilized for
computing the new $\pi(KB)$. In this paper, we suggest an incremental
method to compute the set of prime implicates of the new knowledge
base from the prime implicates of the old knowledge base. The
incremental method based upon the algorithm discussed in \cite{Manoj}.

  The paper is organized as follows. We present the definitions in
  Section 2. In Section 3, we describe the properties and main results
  for computing the prime implicates incrementally. In Section 4, we
  present the incremental algorithm and its correctness. Section 5
  concludes the paper.

\section{Preliminary Concepts}

\ni The alphabet of first order language contains the symbols
$x,y,z,\ldots$ as variables, $f,g,h,\ldots$ as function symbols,
$P,Q,R,\ldots$ as predicates, $\neg,\vee,\wed$ as connectives, $(,)$
and `,' as punctuation marks and, $\forall$ as the universal
quantifier. We assume the syntax and semantics of first order
logic. For an interpretation or a first order structure $i$ and a
formula $X$, we write $i\md X$ if $i$ is a model of $X$. For a set of
formulas $\Sg$ (or a formula) and any formula $Y$ we write $\Sg\md Y$
to denote that for every interpretation $i$ if $i$ is a model of every
formula in $\Sg$ then $i$ is a model of $Y$. In this case, we call $Y$
a logical consequence of $\Sg$. When $\Sg=\{X\}$, we write $X\md Y$
instead of $\{X\}\md Y$. If $X\md Y$ and $Y\md X$ then $X$ and $Y$
are said to be equivalent which is denoted by $X\equiv Y$.

A literal is an atomic formula or negation of an atomic formula. A
disjunctive clause is a finite disjunction of literals which is also
represented as a set of literals. A quantifier-free formula is in
conjunctive normal form (CNF, infact, SCNF) if it is a finite conjunction of
disjunctive clauses. For convenience, a formula is also represented as
a set of clauses.  In this paper, we consider formulas only in clausal
form. In this representation, all variables are considered universally
quantified.

Two literals $r$ and $s$ are said to be {\it complementary} to
each other iff the set $\{r,\neg s\}$ is unifiable with respect to
a most general unifier $\xi$. We call $\xi$ a complementary
substitution of the set $\{r,\neg s\}$. For example, $Pxf(a)$ and
$\neg Pby$ are complementary to each other with respect to the
complementary substitution (most general unifier or mgu, for
short) $[x/b,y/f(a)]$. So a most general unifier bundles upon
infinite number of substitutions to a finite number.

A clause which does not contain a literal and its negation is said
to be {\it fundamental}. Thus a non-fundamental clause is valid.
We avoid taking non-fundamental clauses in clausal form because the
universal quantifiers appearing in the beginning of the formula
can appear before each conjunct of the CNF since $\forall$
distributes over $\wed$. So each clause in a formula of the
knowledge base is assumed to be non-fundamental.

 Let $C_{1}$ and $C_{2}$ be two disjunctive clauses. Then $C_{1}$ {\em
 subsumes} $C_{2}$ iff there is a substitution $\sigma$ such that
 ${C_{1}}{\sigma}\subseteq C_{2}$, i.e, ${C_{1}}{\sigma}\models
 C_2$. For example, $\{\neg{R}{x}{f}(a), \neg{P}{y}\}$ subsumes
 $\{\neg{R}{g}(a){f}(a), \neg{P}{y},$ ${Q}{z}\}$ with respect to a
 $\sigma=[x/{g}(a)]$. A disjunctive clause $C$ is an {\it implicate}
 of a finite set of formulas $X$ (assumed to be in CNF) if $X\sg\md C$
 for a substitution $\sg$. We write $I(X)$ as the set of all
 implicates of $X$. A clause $C$ is a {\it prime implicate} of $X$ if
 $C$ is an implicate of $X$ and there is no other implicate $C^{'}$ of
 $X$ such that $C^{'}\tau\md C$ for a substitution $\tau$ (i.e., if no
 other implicate $C^{'}$ subsumes $C$). $\Pi(X)$ denotes the set of all
 prime implicates of $X$. It may be observed that if $C$ is not prime
 then there exists a prime implicate $D$ of $X$ such that $D\tau\md
 C$. The set of all implicates of $X$ is denoted by $\Psi(X)$

 Note that the notion of prime implicate is well
 defined as the knowledge base contains clauses unique up to
 subsumption. Let $Y$ be a set of fundamental clauses. The residue of
 subsumption of $Y$, denoted by ${Res}(Y)$ is a subset of $Y$ such
 that for every clause $C\in Y$, there is a clause $D\in {Res}(Y)$
 where $D$ subsumes $C$; and no clause in ${Res}(Y)$ subsumes any
 other clause in ${Res}(Y)$.

Let $C_1$, $C_2$ be two clauses in $X$ and $r\in C_1$, $s\in C_2$ be a
pair of complementary literals with respect to a most general unifier
$\sg$. The resolution of $C_1$ and $C_2$ is
$C=[(C_1-\{r\})\cup(C_2-\{s\})]\sg$. If $C$ is fundamental, it is
called {\em consensus} of $C_1$ and $C_2$ denoted by $CON(C_1,C_2)$.
$C$ can also be written as $[(C_1\sg-\{t\})\cup(C_2\sg-\{\neg t\})]$
for a literal $t$, provided $r\sg=t$ and $s\sg=\neg t$. We can also
say that $C$ is the propositional consensus of $C_1\sg$ and $C_2\sg$.
For example, if $C_{1}=\{{R}{b}{x}, \neg {Q}{g}(a)\}$ and
$C_{2}=\{{R}{a}{b}, {Q}{z}\}$ then $CON(C_{1}, C_{2})=\{{R}{b}{x},
{R}{a}{b}\}=$ propositional consensus of $C_1[z/g(a)]$ and
$C_2[z/g(a)]$. If $C$ is the consensus of $C_1$ and $C_2$ with respect
to a most general unifier $\sg$ then $C$ is said to be associated with
$\sg$. By default, each clause $C$ of a set of formulas $X$ is
associated with the empty substitution $\ep$. Let $C_1$ and $C_2$ be
two resolvent clauses associated with substitutions $\sg_1$ and
$\sg_2$, respectively.  Then their consensus with respect to $\sg$ is
defined provided $\sg_1\sg=\sg_2\sg$. In that case the consensus is
the propositional consensus of $C_1\sigma$ and $C_2\sigma$ and the
consensus is associated with the substitution
$\sigma_{3}=\sigma_{1}{\sigma}=\sigma_{2}{\sigma}$.

\section{Computation of Prime Implicates}

\ni Besides presenting some main results from \cite{Manoj}, we
 describe the computation of prime implicates incrementally of
 quantifier free first order formulas in clausal form. Let $X=\{C_{1},
 \ldots, C_{n}\}$ be a formula where each disjunctive clause $C_{i}$
 is fundamental. Each $C_{i}$ is an implicate of $X$ with respect to
 the empty substitution, but each may not be a prime implicate. The
 key is the subsumption of implicates of $X$. As the clauses we deal
 with are disjunctive, if $C_{1}$ subsumes $C_{2}$ then there is a
 substitution $\sigma$ such that ${C_{1}}{\sigma}\models C_2$. We will
 see that computation of prime implicates is the result of deletion of
 subsumed clauses from the consensus closure. We also explore the
 relationship between consensus closure and prime implicates of a
 formula. We can derive the following two results from \cite{Manoj}.

\begin{lemma}
\label{prime11}  A clause $D$ is an implicate of $X\cup C$ if and
only if there is a prime implicate $D^{'}$ of $X\cup C$ such that
$D^{'}$ subsumes $D$.
\end{lemma}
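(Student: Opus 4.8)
The plan is to prove the two implications separately, writing $K = X\cup C$ for the combined knowledge base and relying only on the definitions of implicate, prime implicate, and subsumption together with the transitivity of entailment. I would begin with the reverse direction ($\Leftarrow$), which is the routine one. Assume $D'$ is a prime implicate of $K$ that subsumes $D$. Being prime, $D'$ is in particular an implicate, so $K\sigma_1 \models D'$ for some substitution $\sigma_1$; since every clause is read as universally quantified we also have $K \models K\sigma_1$ (an instance of a universally quantified set is a consequence of it), and hence $K \models D'$ by transitivity. Subsumption supplies a substitution $\tau$ with $D'\tau \subseteq D$, so $D' \models D$ because a subclause entails the full disjunction; a final application of transitivity gives $K \models D$, i.e., $D$ is an implicate of $X\cup C$ (witnessed by the empty substitution).

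For the forward direction ($\Rightarrow$), assume $D$ is an implicate of $K$. If $D$ is already prime, then $D$ subsumes itself via the empty substitution and I take $D' = D$. Otherwise, the definition of prime implicate guarantees a distinct implicate subsuming $D$, and I would pass to a strictly more general subsuming implicate, iterating this descent. The conclusion I want is exactly the observation recorded just above the lemma statement, namely that any non-prime implicate is subsumed by some prime implicate of the knowledge base; invoking it for $K$ produces a prime $D'$ with $D'\tau \models D$, which is what the lemma asserts.

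The main obstacle lies in this forward direction: I must justify that the subsumption descent actually terminates at a prime implicate rather than generating an infinite, strictly descending chain of ever-more-general implicates, a phenomenon that is a genuine danger in the first-order setting (unlike the propositional case). Here I would appeal to the well-definedness already asserted in Section~2, that the knowledge base contains clauses unique up to subsumption and that the residue of subsumption $\mathrm{Res}$ selects a subsumption-minimal representative of each class. This ensures that the implicates of $K$, viewed modulo subsumption, possess minimal elements, and these minimal elements are precisely the prime implicates, thereby guaranteeing the existence of the required $D'$ and completing the argument.
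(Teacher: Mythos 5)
The paper itself offers no proof of this lemma; it is imported from \cite{Manoj} with the remark that it ``can be derived'' from that work, so there is no argument of record to compare yours against. Your reverse direction is correct and complete: primality gives $K\sigma_1\models D'$, universal quantification of variables gives $K\models K\sigma_1$, and subsumption gives $D'\models D$, so transitivity yields $K\models D$ with the empty substitution as witness.

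The forward direction, however, has a genuine gap, and it sits exactly at the point you flag as the main obstacle. You need every non-prime implicate of $K$ to be subsumed by a \emph{subsumption-minimal} implicate, i.e.\ you need the strict subsumption order on the set of all implicates of $K$ to admit no infinite descending chains below $D$. Your justification appeals to the Section~2 remarks that ``the knowledge base contains clauses unique up to subsumption'' and that $\mathrm{Res}$ picks minimal representatives, but those statements concern the finitely many clauses actually stored in the knowledge base (and $\mathrm{Res}$ is only defined for a given set $Y$ of clauses); they say nothing about the full, generally infinite, set $\Psi(K)$ of implicates over which your descent ranges. In first-order logic the subsumption quasi-order on clauses is not well-founded in general, so ``iterate the descent until it stops'' is precisely the step that cannot be taken for granted, and invoking the paper's own unproved observation that every non-prime implicate is subsumed by a prime one is circular --- that observation \emph{is} the forward direction of the lemma. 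To close the gap you would need an actual finiteness or minimality argument, e.g.\ showing that every implicate of $K$ is subsumed by a clause of the finite set $\mathrm{Res}(\overline{L}(K))$ (the route Theorems~\ref{prime14} and~\ref{prime15} gesture at, though they are stated as consequences of this lemma rather than the reverse), or a direct citation of the corresponding result in \cite{Manoj}.
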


\begin{lemma}
\label{prime12} $X\cup C\equiv \Psi(X\cup C)\equiv\Pi(X\cup C)$
\end{lemma}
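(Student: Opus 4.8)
The plan is to prove the two equivalences $X\cup C\equiv\Psi(X\cup C)$ and $\Psi(X\cup C)\equiv\Pi(X\cup C)$ separately; transitivity of $\equiv$ then yields the full chain. Throughout I write $K=X\cup C$ and exploit the standing convention that every variable in every clause is universally quantified, so that $K\models K\sigma$ for any substitution $\sigma$ (universal instantiation) and, likewise, a single clause $D'$ satisfies $D'\models D'\tau$ for every $\tau$.

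For $K\equiv\Psi(K)$ I would argue both entailments. The direction $\Psi(K)\models K$ is immediate: as noted before the lemma, each clause of $K$ is an implicate of $K$ with the empty substitution, so $K\subseteq\Psi(K)$ as clause sets, and a conjunction entails any of its conjuncts. For $K\models\Psi(K)$, take any model $i$ of $K$ and any $D\in\Psi(K)$; by definition $K\sigma\models D$ for some $\sigma$, and since $i\models K$ gives $i\models K\sigma$ by instantiation, we get $i\models D$. As $D$ was arbitrary, $i\models\Psi(K)$.

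For $\Psi(K)\equiv\Pi(K)$, the direction $\Psi(K)\models\Pi(K)$ is trivial because $\Pi(K)\subseteq\Psi(K)$. The reverse direction $\Pi(K)\models\Psi(K)$ is where Lemma \ref{prime11} does the work: given a model $i$ of $\Pi(K)$ and an arbitrary implicate $D\in\Psi(K)$, the lemma supplies a prime implicate $D'\in\Pi(K)$ subsuming $D$, i.e. $D'\tau\subseteq D$, hence $D'\tau\models D$, for some $\tau$. Since $i\models D'$ and $D'\models D'\tau$ by instantiation, we conclude $i\models D$, so $i\models\Psi(K)$.

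The routine parts are the set-inclusion directions; the step requiring care is the passage from the syntactic notions (``implicate up to a substitution'' and ``subsumption'') to genuine semantic entailment. This rests entirely on the universal-quantifier convention: only because all variables are universally quantified may one replace $K\sigma\models D$ by $K\models D$ and $D'\tau\models D$ by $D'\models D$. I would state this instantiation principle explicitly as the single fact underlying both equivalences, and I expect no further obstacle, the potential infiniteness of $\Psi(K)$ and $\Pi(K)$ notwithstanding, since the model-theoretic reading of $\models$ handles arbitrary clause sets uniformly.
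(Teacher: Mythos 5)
Your proof is correct: the paper itself gives no argument for this lemma (it is imported from the cited earlier work \cite{Manoj}), and your decomposition into $X\cup C\equiv\Psi(X\cup C)$ via universal instantiation and $\Psi(X\cup C)\equiv\Pi(X\cup C)$ via Lemma \ref{prime11} is exactly the standard route one would expect. You are also right to flag the instantiation principle ($i\models K$ implies $i\models K\sigma$, and $D'\models D'\tau$) as the one point where the universally-quantified-variables convention is genuinely needed; no gaps remain.
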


The computational aspects of prime implicates is given below. For a
set of clauses $X$, let ${L}(X)$ be the set of all consensus among
clauses in $X$ along with the clauses of $X$, {\it i.e.},
${L}(X)=X\cup \{ S : S$ is a consensus of each possible pair of
clauses in $X \}.$ We construct the sequence $X, {L}(X), {L}({L}(X)),
\ldots$, i.e, ${L^{0}}(X)=X$, ${L^{n+1}}(X)={L}({L^{n}}(X))$ for
$n\geq 0$.  Define the {\it consensus closure} of $X$ as
$\overline{L}(X)= \cup\{{L^{i}}(X):i\in \mathbb N\}$.

\begin{example}
Let $X=(Pxa\vee\neg Qaf(x))\wed(\neg Pba\vee Rbz)\wed(\neg Rxf(a)\vee
Qzf(a))=C_1\wed C_2\wed C_3$. 
\end{example}

The consensus of $C_1$ and $C_2$ with respect to the substititon
$[x/b]$ is $C_4=(\neg Qaf(b)\vee Rbz)$, of $C_1$ and $C_3$ with
respect to the substitution $[z/a, x/a]$ is $C_5=(Paa\vee \neg
Raf(a))$, of $C_2$ and $C_3$ with respect to the substitution
$[x/b,z/f(a)]$ is $C_6=(\neg Pba\vee Qf(a)f(a))$. So $C_4$, $C_5$,
$C_6$ are associated with substitutions $[x/b]$, $[z/a,x/a]$, $[x/b,
z/f(a)]$ and each of $C_1$, $C_2$, $C_3$ is associated with the empty
substitution $\epsilon$. Then \\

\noindent $L^{1}(X)= (Pxa\vee\neg Qaf(x))\wed(\neg Pba\vee
Rbz)\wed(\neg Rxf(a)\vee Qzf(a))\wed $ \\ 
$~~~~~~~~~~~~~$ $(\neg
Qaf(b)\vee Rbz)\wed (Paa\vee \neg Raf(a))\wed (\neg Pba\vee
Qf(a)f(a))$\\
$~~~~~~~~~~~~$ $=C_1\wed C_2\wed C_3\wed C_4\wed C_5\wed C_6$.\\

The consensus of $C_3$ and $C_4$ with respect to the
substitution $[x/b, z/f(a)]$ is $C_7=(Qf(a)f(a)\vee \neg Qaf(b))$. So
$C_7$ is associated with the substitution $[x/b,z/f(a)]$. Note that
the consensus of $C_1$ and $C_6$ with respect to the substitution
$[x/b]$ is not possible as the composition of substitution is not well
defined, i.e, $[\epsilon][x/b]\neq [x/b,z/f(a)][x/b]$. Similarly the
consensus between $C_3$ and $C_5$ is not possible as the composition
of substitution is not well defined. Hence,\\

\noindent $L^{2}(X)= ((Pxa\vee\neg
Qaf(x))\wed(\neg Pba\vee Rbz)\wed(\neg Rxf(a)\vee Qzf(a))\wed (\neg
Qaf(b)\vee $\\
$~~~~~$ $ Rbz)\wed (Paa\vee \neg Raf(a))\wed (\neg Pba\vee
Qf(a)f(a))\wed (Qf(a)f(a)\vee \neg Qaf(b)))$.\\

\noindent Since
$L^{2}(X)=L^{3}(X)$, we have, $\overline{L}(X)= L^{2}(X)$.\hfill{$\Box$}

As each clause of a formula $X$ is itself an implicate, the following
result shows that other implicates can be computed by taking
consensus among the clauses of a formula $X$.

\begin{theorem}
\label{prime13} Consensus of two implicates of $\pi(X_1)\cup X_2$ 
is an implicate of the formula $X_1\cup X_2$ where $X_1$ and $X_2$ are
sets of clauses.
\end{theorem}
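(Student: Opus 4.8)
The plan is to reduce the statement to a single entailment, $X_1\cup X_2\models C$, where $C$ is the consensus in question, and then obtain that entailment from two ingredients: the equivalence between a formula and the set of its prime implicates, and the soundness of consensus (first order resolution). Write $D_1,D_2$ for the two implicates of $\pi(X_1)\cup X_2$ and let $C=CON(D_1,D_2)$ be their consensus, taken with respect to a most general unifier $\sg$. By the definition of implicate it suffices to exhibit some substitution $\tau$ with $(X_1\cup X_2)\tau\models C$; I will in fact establish the stronger $X_1\cup X_2\models C$ and then take $\tau=\ep$.

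First I would record that ``$C$ is an implicate of $Y$'' coincides with $Y\models C$: the forward direction is immediate, since universal instantiation gives $Y\models Y\sg$ for every $\sg$, so $Y\sg\models C$ yields $Y\models C$; the reverse takes $\sg=\ep$. Applying this to the hypotheses, $\pi(X_1)\cup X_2\models D_1$ and $\pi(X_1)\cup X_2\models D_2$, hence $\pi(X_1)\cup X_2\models\{D_1,D_2\}$.

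Next I would prove soundness of the consensus step, namely $\{D_1,D_2\}\models C$. By the definition of consensus, $C$ is the propositional consensus of $D_1\sg$ and $D_2\sg$, obtained by resolving away a complementary pair $t,\neg t$, where $r\sg=t$ and $s\sg=\neg t$ for the complementary literals $r\in D_1$, $s\in D_2$. Soundness of propositional resolution then gives $\{D_1\sg,D_2\sg\}\models C$, while universal instantiation gives $D_1\models D_1\sg$ and $D_2\models D_2\sg$; chaining these yields $\{D_1,D_2\}\models C$. Combining with the previous paragraph by transitivity of $\models$ produces $\pi(X_1)\cup X_2\models C$. Finally, Lemma~\ref{prime12} tells us that a clausal formula is equivalent to the set of its prime implicates, so in particular $\pi(X_1)\equiv X_1$ and therefore $\pi(X_1)\cup X_2\equiv X_1\cup X_2$; substituting this equivalence gives $X_1\cup X_2\models C$, which is exactly the claim that $C$ is an implicate of $X_1\cup X_2$.

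The step I would watch most carefully is the first order bookkeeping inside the consensus: one must check that passing from $D_1,D_2$ to their instances $D_1\sg,D_2\sg$ (so that the complementary literals become a genuine propositional pair $t,\neg t$) is legitimate under the associated-substitution conventions of Section~2, and that the resulting $C$ is fundamental so that it is indeed a clause of the language. Once the reduction to the propositional consensus of $D_1\sg$ and $D_2\sg$ is justified, the soundness argument is purely propositional and routine; the only remaining care is to confirm that the equivalence $\pi(X_1)\equiv X_1$ furnished by Lemma~\ref{prime12} may be read with $X_1$ in place of the ``$X\cup C$'' appearing there.
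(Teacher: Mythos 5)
Your proposal is correct and takes essentially the same route as the paper's proof: both reduce the claim to the entailment $X_1\cup X_2\models C$ by using $X_1\equiv\pi(X_1)$ (Lemma~\ref{prime12}), pass to the instances $D_1\sg$, $D_2\sg$ of the two implicates, and conclude by soundness of the propositional consensus step, which the paper carries out explicitly as a case split on whether a model satisfies the resolved literal $t$. The only difference is presentational: you chain entailments and cite soundness of resolution, while the paper argues directly with an arbitrary model $i$.
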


\begin{proof}
let $C_1$ and $C_2$ be two implicates of $\pi(X_1)\cup X_2$ associated
with $\sg_1$ and $\sg_2$ respectively.  $(\pi(X_1)\cup X_2)\sg_1\md
C_1$, $(\pi(X_1)\cup X_2)\sg_2\md C_2$. $CON(C_{1},
C_{2})=PCON(C_{1}\sigma, C_{2}\sigma)$ provided
${\sigma_{1}}{\sigma}={\sigma_{2}}{\sigma}$ for some substitution
$\sigma$.  So $C=CON(C_{1},C_{2}) =
((C_{1}\sigma-\{t\})\,\cup\,(C_{2}\sigma-\{\neg t\}))$. Let
$i\md(X_1\cup X_2)$.  $i\md (X_1\cup X_2)\sg = X_1\sg\cup X_2\sg$. So
$i\md X_1\sg$ or $i\md X_2\sg$. Let $i\md X_1\sg$. Then,
$i\md\pi(X_1)\sg$ (by Lemma \ref{prime12}. Moreover, $i\md \pi(X_1)\sg\cup
X_2\sg$. $i\md (\pi(X_1)\cup X_2)\sg$.  $i\md C_1$ and $i\md
C_1\sg$. Similarly, if $i\md X_2\sg$ then $i\md C_2\sg$. Suppose $i\md
t$. Then $i\md C_2\sg-\{\neg t\}$, i.e., $i\md
C$. Similarly, if $i\md\neg t$, then $i\md
C_1\sg-\{t\}$, i.e, $i\md c$. Hence $C$ is an implicate of $X_1\cup X_2$
\hfill{$\Box$}
\end{proof}

The following result tells that we can add the prime implicates one by
one to $X$ as it preserves the truth value.

\begin{theorem}
\label{prime13_1}
Let $X=\{C_1,C_2,\ldots,C_n\}$ be a formula and $C$ be the consensus
of a pair of clauses from $X$ then $X\equiv X\wed C$.
\end{theorem}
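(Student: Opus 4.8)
The plan is to establish the equivalence by proving the two entailments $X \wedge C \models X$ and $X \models X \wedge C$ separately. The first is immediate: any model of the stronger formula $X \wedge C$ is in particular a model of $X$, so $X \wedge C \models X$ requires no work. The entire content of the theorem therefore lies in the reverse direction, and since $X \models X$ holds trivially, this reduces to showing that the consensus $C$ is a logical consequence of $X$, i.e., that $C$ is an implicate of $X$.

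To prove $X \models C$, I would first unpack the definition of consensus. Suppose $C = CON(C_i, C_j)$ for clauses $C_i, C_j \in X$, formed from complementary literals $r \in C_i$ and $s \in C_j$ with mgu $\sigma$, so that $r\sigma = t$ and $s\sigma = \neg t$ for some literal $t$, and $C = (C_i\sigma - \{t\}) \cup (C_j\sigma - \{\neg t\})$. Let $m$ be any interpretation with $m \models X$. Since all variables are universally quantified, $m$ is a model of every instance of $X$; in particular $m \models C_i\sigma$ and $m \models C_j\sigma$.

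The core step is a case analysis on the truth value of the pivot literal $t$ under $m$. If $m \models t$, then $m \not\models \neg t$, so the disjunction $C_j\sigma$ must be satisfied by some literal in $C_j\sigma - \{\neg t\}$; that literal belongs to $C$, whence $m \models C$. Symmetrically, if $m \models \neg t$, then $C_i\sigma$ is satisfied by some literal of $C_i\sigma - \{t\} \subseteq C$, again giving $m \models C$. In either case $m \models C$, so $X \models C$, and combining with $X \models X$ yields $X \models X \wedge C$. Together with the trivial direction this gives $X \equiv X \wedge C$.

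I expect the only delicate point to be the bookkeeping around substitutions: one must justify that a model of the universally quantified $X$ is a model of the substituted clauses $C_i\sigma$ and $C_j\sigma$, and that $t$ is genuinely a single literal appearing positively in $C_i\sigma$ and negatively in $C_j\sigma$, so that the two residual sets combine into exactly $C$. This is precisely the soundness-of-consensus argument already carried out in the proof of Theorem \ref{prime13} (with the roles of $X_1$ and $X_2$ collapsed onto $X$), so the result may alternatively be read off as the instance of that theorem in which each $C_i$ is taken as an implicate of $X$ with the empty substitution.
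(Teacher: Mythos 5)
Your proposal is correct and follows essentially the same route as the paper: the direction $X\wed C\md X$ is dismissed as trivial, and $X\md C$ is established by instantiating with the unifier and doing a case analysis on the pivot literal, i.e.\ the standard soundness-of-consensus argument. If anything your handling of the substitution step (passing from $m\md X$ to $m\md C_i\sg$ and $m\md C_j\sg$ before splitting on $t$) is slightly cleaner than the paper's, which cases on $r$ and $s$ directly, but the underlying idea is identical.
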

\begin{proof}
As $X$ is in SCNF and $C$ is a disjunctive clause, $X\wed C\md
X$. Conversely, let $i\md X$. Then $i\md C_k$ for $1\leq k\leq n$, and $i\md
C_i\wed C_j$ for $1\leq \{i, j\}\leq n$. Let $C= CON(C_i, C_j)$, where
$r$ and $s$ be a pair of complementary literals and $r\in C_i$,
$s\in C_j$, $r\sg=t$ and $s\sg=\neg t$.  Let $i\md C_i\wed
C_j=(D_1\vee r)\wed(D_2\vee s)$, where $D_1$ is a disjunctive clause
in $C_i$ leaving $r$ and $D_2$ is a disjunctive clause from $C_j$
leaving $s$. $C= CON(C_i, C_j)=CON(D_1\vee r, D_2\vee s)=D_1\sg\vee
D_2\sg$. Further, $i\md (D_1\vee r)$ and $i\md(D_2\vee s)$. If $i\md D_1\vee r$
then $i\md D_1$. $i\md D_1\sg$. $i\md D_1\sg\vee D_2\sg$ and $i\md
C$. Similarly if $i\md D_2\vee s$ then $i\md C$. Let $i\md r$. Then
$i\not\md s$, so $i\md D_2$ and $i\md D_2\sg$. This implies, $i\md C$. If
$i\md s$ then $i\not\md r$, $i\md D_1$. $i\md D_1\sg$, and $i\md
C$. This implies $i\md X\wed C$. \hfill{$\Box$}

\end{proof}

\begin{theorem}
\label{prime13_2}
$Res(\overline{L}(\pi(X)\cup C))= Res(\overline{L}(X\cup C))$
\end{theorem}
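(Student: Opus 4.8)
The plan is to show that each side of the identity is just the set of prime implicates of the corresponding clause set, and then to collapse the two sides using the equivalence $\pi(X)\equiv X$. The framework of \cite{Manoj}, stated informally in Section~3 as ``computation of prime implicates is the result of deletion of subsumed clauses from the consensus closure,'' supplies the characterization $Res(\overline{L}(Y))=\Pi(Y)$ for every set of fundamental clauses $Y$. Invoking this once with $Y=X\cup C$ and once with $Y=\pi(X)\cup C$ reduces the theorem to the single claim $\Pi(\pi(X)\cup C)=\Pi(X\cup C)$.

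For that claim I would argue purely from equivalence. By Lemma~\ref{prime12} we have $\pi(X)\equiv X$, and adjoining the common clause $C$ gives $\pi(X)\cup C\equiv X\cup C$. Because a clause $D$ is an implicate of a set $Y$ exactly when $Y\md D$ (the witnessing substitution $\sg$ may be taken to be $\ep$, since $Y\md Y\sg$ always holds), equivalent formulas have identical implicate sets; hence $\Psi(\pi(X)\cup C)=\Psi(X\cup C)$. Selecting the subsumption-minimal members of these equal sets yields $\Pi(\pi(X)\cup C)=\Pi(X\cup C)$. Chaining the three equalities gives $Res(\overline{L}(\pi(X)\cup C))=\Pi(\pi(X)\cup C)=\Pi(X\cup C)=Res(\overline{L}(X\cup C))$.

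If a self-contained proof is preferred over citing the characterization, I would instead establish $Res$-equality by mutual subsumption, i.e.\ show every clause of $\overline{L}(\pi(X)\cup C)$ is subsumed by a clause of $\overline{L}(X\cup C)$ and conversely, from which the residues coincide. The easy inclusion follows from Theorem~\ref{prime13}: the clauses of $\pi(X)\cup C$ are implicates of $X\cup C$, and by induction every consensus built over them stays an implicate of $X\cup C$, so by Lemma~\ref{prime11} each such clause is subsumed by a prime implicate of $X\cup C$.

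The hard part will be the reverse, completeness direction: a prime implicate of $X\cup C$ might \emph{a priori} require a consensus derivation that uses clauses of $X$ which were discarded in passing to $\pi(X)$. The main obstacle is therefore a subsumption-lifting property for consensus --- if $D^{'}$ subsumes $D$, then any consensus formed using $D$ is subsumed by a consensus formed using $D^{'}$ (together with a subsuming partner) --- so that every derivation over $X\cup C$ can be simulated over $\pi(X)\cup C$ by a subsuming one. In first order logic this lifting must be carried out with care about the associated most general unifiers and the well-definedness of the composed substitutions seen in the Example, which is exactly where the propositional intuition needs the extra first-order bookkeeping.
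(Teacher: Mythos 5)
Your primary argument is correct but takes a genuinely different route from the paper. The paper proves the identity by mutual inclusion of the residues themselves: the forward inclusion from the containment $\overline{L}(\pi(X)\cup C)\se\overline{L}(X\cup C)$, and the converse by assuming $C_1\in Res(\overline{L}(X\cup C))\setminus Res(\overline{L}(\pi(X)\cup C))$ and deriving a contradiction from a clause $D_1\in\overline{L}(\pi(X)\cup C)\se\overline{L}(X\cup C)$ that subsumes $C_1$. You instead factor the statement through the characterization $Res(\overline{L}(Y))=\Pi(Y)$ of \cite{Manoj} applied to both sides, reducing everything to $\Pi(\pi(X)\cup C)=\Pi(X\cup C)$, which you obtain from Lemma~\ref{prime12} via the (correct) observation that implicate sets, hence prime implicate sets, are invariant under logical equivalence. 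Your route is shorter and makes the semantic reason for the theorem transparent, but it leans on the full strength of the completeness result for the clause set $\pi(X)\cup C$; note that the paper derives Theorem~\ref{prime15} (which is exactly $\pi(X\cup C)=Res(\overline{L}(\pi(X)\cup C))$) \emph{as a consequence of} Theorem~\ref{prime13_2}, so if \cite{Manoj} establishes $Res(\overline{L}(Y))=\Pi(Y)$ only for the original knowledge base rather than for an arbitrary clause set $Y$, your first argument would be circular within this paper's logical order, and the burden would fall on your sketched ``self-contained'' variant --- whose hard completeness direction (the subsumption-lifting property for consensus, with the attendant mgu bookkeeping) you correctly identify but do not carry out. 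It is worth saying that this lifting step is also the point that the paper's own converse inclusion glosses over (it tacitly assumes $C_1\in\overline{L}(\pi(X)\cup C)$), so your proposal at least names the real difficulty explicitly where the paper's proof does not.
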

\begin{proof}
Obviously, $\overline{L}(\pi(X)\cup C)\se \overline{L}(X\cup C)$. This
implies $Res(\overline{L}(\pi(X)\cup C))\se Res(\overline{L}(X\cup C))$

Conversely, let $C_1\in Res(\overline{L}(X\cup C))$. So $C_1\in
\overline{L}(X\cup C)$ and there exists no $D\in \overline{L}(X\cup
C)$ such that $D$ subsumes $C_1$. If $C_1\not\in
Res(\overline{L}(\pi(X)\cup C))$ then $D_1\in \overline{L}(\pi(X)\cup
C)$ such that $D_1$ subsumes $C_1$. As $\overline{L}(\pi(X)\cup C)\se
\overline{L}(X\cup C)$, there exists $D_1\in \overline{L}(X\cup C)$
such that $D_1$ subsumes $C_1$. This gives a contradiction. So $C_1\in
Res(\overline{L}(\pi(X)\cup C))$. This implies $\overline{L}(X\cup
C)\se Res(\overline{L}(\pi(X)\cup C))$.\hfill{$\Box$}
\end{proof}

The following results are consequences of Lemma \ref{prime11} and
\ref{prime12} and Theorem \ref{prime13_2}.

 \begin{theorem}
\label{prime14} A clause $D_1$ is an implicate of $X\cup C$ iff 
there is $D_2\in\overline {L}(\pi(X)\cup C)$ such that $D_2$ subsumes
$D_1$.
\end{theorem}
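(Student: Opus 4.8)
The plan is to prove Theorem~\ref{prime14} as a two-way implication, using the previously established machinery. The central tool is Theorem~\ref{prime13_2}, which tells us that $Res(\overline{L}(\pi(X)\cup C))=Res(\overline{L}(X\cup C))$, so the two consensus closures have the same residue of subsumption. The other ingredients are Lemma~\ref{prime11} (a clause $D$ is an implicate of $X\cup C$ iff some prime implicate of $X\cup C$ subsumes it) and Lemma~\ref{prime12} (the equivalence $X\cup C\equiv\Psi(X\cup C)\equiv\Pi(X\cup C)$). I would also need the (standard, from the cited work) fact that the prime implicates of a formula coincide with the residue of subsumption of its consensus closure, i.e.\ that $\Pi(X\cup C)=Res(\overline{L}(X\cup C))$.

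For the first direction, suppose $D_1$ is an implicate of $X\cup C$. By Lemma~\ref{prime11} there is a prime implicate $D'$ of $X\cup C$ that subsumes $D_1$. Since $D'\in\Pi(X\cup C)=Res(\overline{L}(X\cup C))$, Theorem~\ref{prime13_2} gives $D'\in Res(\overline{L}(\pi(X)\cup C))\se\overline{L}(\pi(X)\cup C)$. Setting $D_2=D'$ yields an element of $\overline{L}(\pi(X)\cup C)$ that subsumes $D_1$, as required.

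For the converse, suppose $D_2\in\overline{L}(\pi(X)\cup C)$ subsumes $D_1$. The first step is to show that $D_2$ is itself an implicate of $X\cup C$: every element of the consensus closure $\overline{L}(\pi(X)\cup C)$ is obtained by iterated consensus starting from $\pi(X)\cup C$, and Theorem~\ref{prime13} (consensus of two implicates of $\pi(X)\cup C$ is an implicate of $X\cup C$) together with the base case that each clause of $\pi(X)\cup C$ is an implicate of $X\cup C$ gives, by induction on the stage $L^i$, that $D_2\in\Psi(X\cup C)$. Since $D_2$ subsumes $D_1$, there is a substitution $\tau$ with $D_2\tau\md D_1$; combined with $D_2$ being an implicate (so $(X\cup C)\sg\md D_2$ for some $\sg$), transitivity of entailment shows $D_1$ is an implicate of $X\cup C$, completing the proof.

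The main obstacle is the converse direction, specifically the inductive argument that every clause in $\overline{L}(\pi(X)\cup C)$ is an implicate of $X\cup C$. Theorem~\ref{prime13} is phrased for a single consensus of two implicates of $\pi(X_1)\cup X_2$, so one must verify that the hypothesis is preserved along the whole chain $L^0,L^1,L^2,\ldots$ — in particular checking that the substitution-composition bookkeeping (the compatibility condition $\sg_1\sg=\sg_2\sg$ governing when a consensus is defined) never breaks the claim that the resulting clause remains an implicate of the base formula. Care is also needed to reconcile the theorem's use of $\pi$ with the definition's use of $\Pi$, which appear to denote the same set of prime implicates.
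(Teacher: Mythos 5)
Your proposal is correct and follows essentially the route the paper intends: the paper gives no explicit proof of Theorem~\ref{prime14}, merely asserting it is a consequence of Lemmas~\ref{prime11}, \ref{prime12} and Theorem~\ref{prime13_2}, and your argument is a faithful expansion of exactly that derivation (using $\Pi(X\cup C)=Res(\overline{L}(X\cup C))$ from the cited work for the forward direction and an induction via Theorem~\ref{prime13} for the converse). The obstacles you flag --- preserving implicate-hood along the iterated consensus chain and the $\pi$/$\Pi$ notation --- are real but handled correctly at the paper's own level of rigor.
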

\begin {theorem}
\label{prime15} The set of all prime implicates of $X\cup C$ is a subset of 
the consensus closure of
$\pi(X)\cup C$, i.e, $\pi(X\cup C)\se \overline
{L}(\pi(X)\cup C)$. Moreover, $\pi(X\cup C)= Res(\overline
{L}(\pi(X)\cup C))$
\end{theorem}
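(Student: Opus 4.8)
The plan is to derive both assertions directly from Theorem~\ref{prime14}, together with the definitions of prime implicate and of the residue $Res$. The workhorse observation I will isolate first is that every clause lying in $\overline{L}(\pi(X)\cup C)$ is itself an implicate of $X\cup C$. Indeed, any clause subsumes itself (via the empty substitution $\ep$, since $D\ep=D\se D$), so applying the ``if'' direction of Theorem~\ref{prime14} to an arbitrary $D\in\overline{L}(\pi(X)\cup C)$, with the subsumed clause taken to be $D$ as well, shows that $D$ is an implicate of $X\cup C$. I will refer to this as fact $(\ast)$ and use it repeatedly; note it avoids having to re-run the consensus induction behind Theorem~\ref{prime13}.

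First I would establish the inclusion $\pi(X\cup C)\se\overline{L}(\pi(X)\cup C)$. Take any prime implicate $D_1$ of $X\cup C$. Since $D_1$ is in particular an implicate, Theorem~\ref{prime14} supplies a clause $D_2\in\overline{L}(\pi(X)\cup C)$ that subsumes $D_1$. By $(\ast)$, $D_2$ is itself an implicate of $X\cup C$. But primality of $D_1$ says that no \emph{other} implicate subsumes it, so $D_2$ cannot be distinct from $D_1$; hence $D_1=D_2\in\overline{L}(\pi(X)\cup C)$, which is the claimed inclusion.

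For the equality $\pi(X\cup C)=Res(\overline{L}(\pi(X)\cup C))$ I would argue both containments. For $\se$, given $D_1\in\pi(X\cup C)$ the previous paragraph already places $D_1$ in $\overline{L}(\pi(X)\cup C)$; if some distinct $D\in\overline{L}(\pi(X)\cup C)$ subsumed $D_1$, then by $(\ast)$ this $D$ would be an implicate subsuming the prime clause $D_1$, a contradiction, so no clause of the closure subsumes $D_1$ and therefore $D_1\in Res(\overline{L}(\pi(X)\cup C))$. For $\supseteq$, take $D_1\in Res(\overline{L}(\pi(X)\cup C))$; then $D_1\in\overline{L}(\pi(X)\cup C)$, and by $(\ast)$ it is an implicate of $X\cup C$, so Lemma~\ref{prime11} yields a prime implicate $D'$ subsuming it. The inclusion already proved puts $D'\in\overline{L}(\pi(X)\cup C)$, while membership of $D_1$ in the residue forbids any distinct clause of the closure from subsuming it, forcing $D'=D_1$; hence $D_1$ is prime.

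The individual steps are short, and the real point requiring care is the phrase ``no \emph{other} implicate subsumes it'' in the definition of prime. In the first-order setting clauses are identified only up to subsumption-equivalence (variable renaming), so the main obstacle is justifying each collapse $D_2=D_1$ (respectively $D'=D_1$) cleanly: one must check that ``$D_2$ an implicate subsuming the prime clause $D_1$'' really forces $D_2$ to coincide with $D_1$ in the quotient the paper works in, rather than merely being logically equivalent to it. This is exactly where the standing convention that the knowledge base contains clauses unique up to subsumption must be invoked, and I would state that reliance explicitly at each collapse.
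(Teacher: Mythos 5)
Your proof is correct, and it is worth noting that the paper itself supplies no proof of this theorem at all: it merely asserts that Theorems~\ref{prime14} and~\ref{prime15} are ``consequences of Lemma~\ref{prime11} and~\ref{prime12} and Theorem~\ref{prime13_2}.'' Your derivation --- extracting the fact $(\ast)$ that every clause of $\overline{L}(\pi(X)\cup C)$ is an implicate of $X\cup C$ from the ``if'' direction of Theorem~\ref{prime14} applied with $D_2=D_1$, then using primality to force the collapses --- is a legitimate and natural way to fill that gap, and it correctly respects the order of dependence (Theorem~\ref{prime14} precedes Theorem~\ref{prime15}). You are also right to flag that every collapse $D_2=D_1$ rests on the paper's standing convention that clauses are kept unique up to subsumption; without that, one only gets subsumption-equivalence.

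One step deserves a little more care than you give it. In the $\supseteq$ direction of the second claim you assert that ``membership of $D_1$ in the residue forbids any distinct clause of the closure from subsuming it.'' The paper's definition of $Res(Y)$ only forbids a clause of $Res(Y)$ from subsuming another clause of $Res(Y)$; it says nothing directly about clauses of $Y\setminus Res(Y)$. To close this, suppose $D'\in\overline{L}(\pi(X)\cup C)$ subsumes $D_1\in Res(\overline{L}(\pi(X)\cup C))$. By the first defining property of $Res$ there is $D''\in Res(\overline{L}(\pi(X)\cup C))$ subsuming $D'$; subsumption is transitive, so $D''$ subsumes $D_1$, whence $D''=D_1$ by the second defining property; then $D_1$ and $D'$ subsume each other and are identified by the convention. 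A symmetric appeal to the first defining property of $Res$ is also what legitimizes your ``therefore $D_1\in Res(\cdots)$'' in the $\se$ direction. These are one-line repairs, not conceptual gaps; the argument as a whole stands.
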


Moreover, the sets $\pi(X_1\cup X_2)$ and $\pi(\pi(X_1)\cup X_2)$ are
not only equivalent but also identical, as the next result shows.
theorem.

\begin{theorem}
\label{prime16} $\pi(X_1\cup X_2)=\pi(\pi(X_1)\cup X_2)$
\end{theorem}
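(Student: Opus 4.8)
The plan is to prove set equality $\pi(X_1\cup X_2)=\pi(\pi(X_1)\cup X_2)$ by establishing mutual inclusion, relying heavily on the characterization already in hand. The crucial observation is that Theorem~\ref{prime15} gives us a canonical description of prime implicates as the subsumption residue of a consensus closure, namely $\pi(Y\cup C)=Res(\overline{L}(\pi(Y)\cup C))$, and Theorem~\ref{prime13_2} tells us that this residue is unchanged whether we start the closure from $Y$ or from $\pi(Y)$. The idea is to view $X_2$ in the role played by the single clause $C$ in those theorems (a set of clauses being just a conjunction of them), so that both sides of the desired identity collapse to the same residue expression.

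First I would write the right-hand side using Theorem~\ref{prime15} with the base $\pi(X_1)$ in place of $X$ and $X_2$ in place of $C$, obtaining
\begin{equation}
\pi(\pi(X_1)\cup X_2)=Res(\overline{L}(\pi(\pi(X_1))\cup X_2)).
\end{equation}
Next I would simplify the inner term using the idempotence of $\pi$, i.e.\ that $\pi(\pi(X_1))=\pi(X_1)$, which holds because the prime implicates of $X_1$ are themselves unique up to subsumption and their own set of prime implicates; this reduces the right-hand side to $Res(\overline{L}(\pi(X_1)\cup X_2))$. Then I would write the left-hand side directly from Theorem~\ref{prime15}, this time with $X_1$ as the base, giving $\pi(X_1\cup X_2)=Res(\overline{L}(X_1\cup X_2))$. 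Finally I would invoke Theorem~\ref{prime13_2}, in the form $Res(\overline{L}(\pi(X_1)\cup X_2))=Res(\overline{L}(X_1\cup X_2))$, to conclude that both sides equal the same residue, hence are identical as sets.

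The main obstacle is that Theorem~\ref{prime15} and Theorem~\ref{prime13_2} are stated for a single clause $C$ rather than an arbitrary clause set $X_2$, so I must argue that the residue-of-consensus-closure characterization extends verbatim when $C$ is replaced by a finite set of clauses. Since a set of clauses is semantically just their conjunction and the consensus closure operator $\overline{L}$ as well as $Res$ are defined on arbitrary sets of clauses, this extension should be routine: the proofs of those theorems never use that $C$ is a single clause, only that $\pi(X)\cup C$ and $X\cup C$ are sets of clauses with $\overline{L}(\pi(X)\cup C)\subseteq\overline{L}(X\cup C)$. I would state this generality explicitly (or remark that the single-clause statements hold mutatis mutandis for clause sets) before chaining the equalities. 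A secondary point to verify is the idempotence $\pi(\pi(X_1))=\pi(X_1)$; this follows from Lemma~\ref{prime12}, which gives $\pi(X_1)\equiv\Pi(X_1)$ together with the fact that a prime implicate has no distinct subsuming implicate, so applying $\pi$ again returns the same subsumption-minimal set.
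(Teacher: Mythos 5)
Your proof is correct in substance, but it takes a genuinely different route from the paper. The paper's own proof is a direct element-chasing argument: it takes $C\in\pi(X_1\cup X_2)$, unfolds the definition of prime implicate, and uses only the semantic equivalence $X_1\equiv\pi(X_1)$ (Lemma \ref{prime12}) to replace $X_1$ by $\pi(X_1)$ both in the entailment $(X_1\cup X_2)\sg\md C$ and in the "no implicate subsumes $C$" condition, then argues symmetrically for the reverse inclusion. You instead route everything through the syntactic characterization $\pi(Y\cup C)=Res(\overline{L}(\pi(Y)\cup C))$ of Theorem \ref{prime15} together with Theorem \ref{prime13_2}, collapsing both sides to the common residue $Res(\overline{L}(\pi(X_1)\cup X_2))$. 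This is a clean chain of identities, but it is the heavier path: Theorems \ref{prime15} and \ref{prime13_2} are stated in the paper only for a single added clause $C$ and are themselves asserted without written proof, so you must (as you correctly note) argue the extension to a clause set $X_2$, and you must separately establish idempotence $\pi(\pi(X_1))=\pi(X_1)$. It is worth observing that the justification you sketch for idempotence --- equivalence of $X_1$ and $\pi(X_1)$ preserves the implicate set, hence the subsumption-minimal implicates --- is exactly the paper's argument for the full theorem specialized to $X_2=\emptyset$; so the real logical content of your proof lives in what you call the "secondary point," while the residue-of-closure chain is machinery on top. Both proofs are valid; the paper's is more elementary and self-contained, yours makes the result an immediate corollary of the compilation-theoretic characterization once the two auxiliary facts are in place.
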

\begin{proof}
let $C\in \pi(X_1\cup X_2)$. $(X_1\cup X_2)\sg_1\md C$ and there does
not exist any implicate $D$ of $X_1\cup X_2$ such that $D$ subsumes
$C$. $X_1\sg\cup X_2\sg\md C$ and there doesnot exist any implicate
$D$ of $X_1\cup X_2$ such that $D$ subsumes $C$. As $X_1\equiv
\pi(X_1)$,  $\pi(X_1)\sg\cup X_2\sg\md C$ and there does not exist
any implicate $D$ of $\pi(X_1)\cup X_2$ such that 
$D$ subsumes $C$. $(\pi(X_1)\cup X_2)\sg\md C$ and there does not exist
any implicate $D$ of $\pi(X_1)\cup X_2$ such that 
$D$ subsumes $C$. $C\in\pi(\pi(X_1)\cup X_2)$. So $\pi(X_1\cup X_2)\se
\pi(\pi(X_1)\cup X_2)$. Similarly the inclusion $\pi(\pi(X_1)\cup
X_2)\se \pi(X_1\cup X_2)$ is proved. \hfill{$\Box$}
\end{proof}

\section{Incremental Algorithm}
The following algorithm computes the set of prime implicates of
$\pi(X_1)\wed \Sg$ (i.e, of $(\pi(X_1)\cup \Sg)$ by consensus
subsumption method in first order logic.  Recall that for a set of
clauses $A$, $L(A)$ denotes the set of clauses of $A$ along with the
consensus of each possible pair of clauses of $A$. The algorithm
computes the consensus $L(\pi(X_1)\cup \Sg)$ by taking clauses from
$\pi(X_1)$ and clauses from $\Sg$. It does not compute the consensus
between two clauses of $\pi(X_1)$ as it is wasteful. $L(\pi(X_1)\wed
\Sg)=\{CON(D_1, D_2)$ : $D_1\in \pi(X_1)\cup \Sg$ and $D_2\in \Sg\}$. The
algorithm applies subsumption on $L(\pi(X_1)\wed \Sg)$ and keeps the
residue $Res(L(\pi(X_1)\wed \Sg))$ and repeat the steps till two
iteration steps produce the same result.


\newpage

\ni{\bf Algorithm} \hs{.2cm} {\em INCRPI}

\vs{.2cm}

\ni  Input: The set of prime implicates $\pi(X_1)$ and a clause $C$\\
 Output: The set of prime implicates of $X_1\cup C$\\
{\tt begin\\
$~~~~$if $C$ is a non-fundamental clause, then\\
$~~~~~~~~$$\pi(X_1\cup C)=\pi(X_1)$\\
$~~~~$else\\
$~~~~~~~~$$\Sg=\{C\}$;\\
$~~~~~~~~$$\eta_0=\emptyset$;\\
$~~~~~~~~$$i=1$;\\
$~~~~~~~~$$\gamma=\pi(X_1)\cup \Sg$;\\
$~~~~~~~~$$\eta_i=Res(\pi(X_1)\cup\Sg)$;\\
$~~~~~~~~$if $\Sg$ is deleted\\
$~~~~~~~~~~$then stop\\
$~~~~~~~~$else\\
$~~~~~~~~~~$while $\eta_i\neq \eta_{i-1}$\\
$~~~~~~~~~~~~~$compute $R=CON(D_1,D_2)$ s.t. 
$D_1\in \eta_i$ and $D_2\in \Sg$;\\
$~~~~~~~~~~~~~$$\Sg=\Sg\cup R$;\\
$~~~~~~~~~~~~~$$L(\eta_i)=\eta_i\cup \Sg$;\\ 
$~~~~~~~~~~~~~$$\eta_{i+1}=Res(L(\eta_i))$;\\
$~~~~~~~~~~~~~$if any clause of $\Sg$ is deleted\\
$~~~~~~~~~~~~~~~~~~~~$then update $\Sg$;\\
$~~~~~~~~~~~~~$$i=i+1$;\\
$~~~~~~~~~~$$\pi(\gamma)=\eta_{i+1}$;\\
end}

\begin{theorem}
\label{correctness} Let $\pi(X_1)$ be a set of prime implicates of a 
formula $X_1$ and $C$
be any clause. The incremental algorithm generates the set of prime
implicates of $X_1\cup \{C\}.$
\end{theorem}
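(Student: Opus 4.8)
The plan is to prove that the algorithm {\em INCRPI} terminates and that upon termination the variable $\eta$ holds exactly $\pi(X_1\cup\{C\})$. The proof splits naturally along the two branches of the algorithm. First I would dispose of the trivial case: if $C$ is non-fundamental, then $C$ is valid, so $X_1\cup\{C\}\equiv X_1$, whence $\pi(X_1\cup\{C\})=\pi(X_1)$, which is precisely the value the algorithm returns. The substantive case is when $C$ is fundamental, and here the whole burden is to show that the iterative loop computes $Res(\overline{L}(\pi(X_1)\cup\{C\}))$, because Theorem~\ref{prime15} already identifies this residue with $\pi(X_1\cup C)$.

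For the main branch I would argue in two halves, correctness of the fixed point and the fact that the loop actually reaches it. For the first half I would show by induction on $i$ that $\eta_i=Res(L^{i}(\pi(X_1)\cup\Sg))$, i.e. that after $i$ passes the variable $\eta_i$ contains the residue of the $i$-th consensus level. The base case $i=1$ is the initialization line $\eta_1=Res(\pi(X_1)\cup\Sg)$. For the inductive step I would appeal to the body of the loop: it forms all consensuses $R=CON(D_1,D_2)$ with $D_1\in\eta_i$ and $D_2\in\Sg$, sets $L(\eta_i)=\eta_i\cup\Sg$, and takes $\eta_{i+1}=Res(L(\eta_i))$; invoking Theorem~\ref{prime13} guarantees each such $R$ is a genuine implicate of $X_1\cup C$, so no spurious clauses are ever introduced, and the residue operation discards exactly the subsumed ones. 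The key external input here is Theorem~\ref{prime15}, which lets me identify the stabilized value with $\pi(X_1\cup C)$, together with Theorem~\ref{prime16}, which certifies that working from $\pi(X_1)$ rather than $X_1$ loses nothing: $\pi(\pi(X_1)\cup C)=\pi(X_1\cup C)$.

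For the second half, termination, I would observe that the loop exits exactly when $\eta_i=\eta_{i-1}$, so I must show this equality is eventually forced. The sequence $L^{0},L^{1},L^{2},\ldots$ of consensus levels is monotone under the residue-of-subsumption ordering, and in the first-order clausal setting each consensus is associated with a most general unifier that bundles infinitely many ground substitutions into one, so only finitely many non-subsumed clauses (up to subsumption) can be generated; thus the residue stabilizes after finitely many steps, matching the behavior already illustrated in the worked Example where $L^{2}(X)=L^{3}(X)$. I would also remark that the optimization of restricting one resolvent to $\Sg$ (never resolving two clauses of $\pi(X_1)$ together) is justified because $\pi(X_1)$ is already consensus-closed up to subsumption, so any new prime implicate of $X_1\cup C$ must use $C$ or one of its descendants.

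The hard part, and the place where the argument is genuinely delicate rather than bookkeeping, is the termination claim and the precise invariant tying $\eta_i$ to $Res(\overline{L}(\pi(X_1)\cup C))$. In the propositional case finiteness of the closure is immediate, but in first order logic one must argue that associating consensuses with mgu's keeps the set of distinct-up-to-subsumption clauses finite, and that the residue operation interleaved with consensus-taking does not discard a clause that would later have been needed to generate a prime implicate. Establishing that $Res$ commutes appropriately with the level operator $L$, so that $Res(L^{i})$ stabilizes to $Res(\overline{L})$ rather than to some strictly smaller set, is the crux; everything else reduces to the already-proved Theorems~\ref{prime13}, \ref{prime15}, and~\ref{prime16}.
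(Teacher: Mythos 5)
Your proposal follows the paper's own proof essentially step for step: the trivial non-fundamental branch, Theorem~\ref{prime13} to certify that each consensus $R$ is an implicate, Theorem~\ref{prime13_1} to justify adding it, Theorem~\ref{prime15} to identify the stabilized residue $Res(\overline{L}(\pi(X_1)\cup C))$ with $\pi(X_1\cup C)$, and Theorem~\ref{prime16} to justify starting from $\pi(X_1)$ rather than $X_1$. The one place you genuinely diverge is termination, and that is also where your sketch breaks down. The paper does not prove termination at all --- its argument is explicitly conditional (``If the algorithm terminates, at some stage then $\overline{L}(\eta_i)=\overline{L}(\eta_{i-1})$\ldots'') --- whereas you assert that only finitely many non-subsumed clauses can ever be generated. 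That claim is false for first-order clausal logic in general: ordinary resolution from $\{P(a),\ \neg P(x)\vee P(f(x))\}$ yields $P(f^{n}(a))$ for every $n$, and no member of this family subsumes any other, so the closure need not stabilize modulo subsumption; bundling ground instances into mgu's does not by itself rescue finiteness, and nothing in the paper shows that its composition-of-substitutions restriction does either. So you should either state the result conditionally on termination, as the paper implicitly does, or supply a genuinely new argument. Your other flagged worry --- that $\eta_{i+1}=Res(L(\eta_i))$, with $\eta_i$ already residuated, need not coincide with $Res(L^{i+1}(\pi(X_1)\cup\Sg))$ because a subsumed clause discarded early might have been needed for a later consensus --- is a real issue; you are right to call it the crux, but neither your sketch nor the paper's proof actually establishes that $Res$ commutes with the level operator in the required sense, so identifying the honest gaps is the main value your write-up adds over the paper's argument.
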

\begin{proof}
Let $\gamma=\pi(X_1)\cup \{C\}$. $\eta_1$ is computed
by taking residue of subsumption on $\gamma$. If a clause
$C\in \gamma$ subsumes a clause $D \in \gamma$,
any clause entailed by $D$ is entailed by $C$. So $D$ can be
discarded from $\gamma$ without changing its deduction closure.
Let $\eta_1= Res(\gamma)$. 
$\gamma =\pi(\eta_1)$, by Theorem \ref{prime15}. 

If the clause $C$ is deleted while taking residue then
$\pi(\gamma)=\pi(X_1\cup \{C\})=\pi(X_1)$.  If $C$ is not deleted from
$\Sg$, then the algorithm computes the consensus $R$ between a pair of
clauses from $\eta_1$ (i.e,$Res(\gamma)$) and $C$. It does not take
consensus between two clauses in $\pi(X_1)$ as they are prime
implicates. Any attempt to take consensus between them will increase
the number of uncessary operations. As every clause of $\eta_1$ is an
implicate of $\eta_1$, by Theorem \ref{prime13}, $R$ is an implicate
of $\eta_1$, i.e, of $\gamma$. $R$ can be added to $\eta_1$ without
changing its deduction closure by Theorem \ref{prime13_1}. We add $R$
to $\Sg=\{C\}$ as the new clauses formed can also take part in further
consensus. We maintain $\pi(X_1)$ and $\Sg$ separately so that while
taking consensus next time at least one clause will be from $\Sg$,
i.e, $L(\eta_1)=\eta_1\cup \Sg$. By Theorem \ref{prime15},
$\pi(\gamma)=\pi(\eta_1)\se\overline{L}(\eta_1)$. If any clause $C$
subsumes a clause $D$ in $L(\eta_1)$ then $D$ can be discarded without
changing the deduction closure as $C\md D\sg$. Note $\eta_2=
Res(L(\eta_1))$. By Theorem \ref{prime15}, $\pi(\gamma)=\pi(\eta_1)\se
\overline{L}(\eta_2)$. In general, $\pi(\gamma)=\pi(\eta_1)\se
\eta_i=Res(L(\eta_{i-1}))\se\overline{L}(\eta_{i-1})$. If the
algorithm terminates, at some stage then
$\overline{L}(\eta_i)=\overline{L}(\eta_{i-1})$ and
$Res(\overline{L}(\eta_i))= Res(\overline{L}(\eta_{i-1}))$. By Theorem
\ref{prime15}, $\pi(\gamma)= Res(\overline{L}(\eta_i))$. Since the
algorithm computes $\eta_{i+1}=Res(L(\eta_i))$,
$\pi(\gamma)=\eta_{i+1}$.  By Theorem \ref{prime16} we obtain the set
of prime implicates of $X_1\cup C$ as $\pi(X_1\cup C)=\pi(\pi(X_1)\cup
C)=\pi(\gamma)=\eta_{i+1}$. \hfill{$\Box$}
\end{proof}

\begin{example}\label{exp}
Let $X=\{\{Qy\},\{\neg Rf(x)b\},\{Px\vee Ryb\vee\neg
Qz\}\}$$=\eta_1(say)$ and $C$ be the clause $C=\{\neg Pa\vee\neg
Qz\}$$=\Sg$, another clause. Take $\Sg=\{C\}$
\end{example}

As computed in \cite{Manoj}, the set of prime implicates of the formula $X$
is $\pi(X)=\{\{Qy\},\{\neg Rf(x)b\}, \{Px\vee Rzb\},
\{Px\vee\neg Qz\}\}$, where the clause $Qy$ is associated with $\epsilon$,
$\neg Rf(x)b$ is associated with $\epsilon$, $Px\vee Rzb$ is associated with
$[y/z]$ and $Px\vee\neg Qz$ is associated with $[y/f(x)]$. \\ 

\noindent Let $\eta_1=\pi(X)\cup \Sg=\{\{Qy\},\{\neg Rf(x)b\},
\{Px\vee Rzb\}, \{Px\vee\neg Qz\},\{\neg Pa\vee\neg Qz\}\}$.\\

\noindent As the literal $Qy$ in $\{Qy\}$ and $\neg Qz$ in $\{\neg
Pa\vee\neg Qz\}$ are a pair of complementary literals with respect to
the substitution $[y/z]$,  the consensus between the clauses
$\{Qy\}$ and $\{\neg Pa\vee\neg Qz\}$ is $\{\neg Pa\}$. Here, the
substitution $[y/z]$ is a most general unifier.  Note that we can not
take consensus between $\{Px\vee Rzb\}$ and $\{\neg Pa\vee\neg Qz\}$,
between $\{Px\vee\neg Qz\}$ and $\{\neg Pa\vee\neg Qz\}$ as
composition of substitution are not well defined. Now updating $\Sg$ we
get, $\Sg= \{\{\neg Pa\vee\neg Qz\},\{\neg Pa\}\}$.  The new clause
formed is added to $\eta_1$ to get $L(\eta_1)$.\\

\noindent $L(\eta_1)=\{\{Qy\},\{\neg Rf(x)b\}, \{Px\vee Rzb\},
\{Px\vee\neg Qz\},\{\neg Pa\vee\neg Qz\}, \{\neg Pa\}\}$.\\

\noindent As $\{\neg Pa\}$ subsumes
$\{\neg Pa\vee\neg Qz\}$ in $L(\eta_1)$, we get the residue as\\

\noindent $\eta_2=Res(L(\eta_1))=\{\{Qy\},\{\neg Rf(x)b\}, \{Px\vee
Rzb\}, \{Px\vee\neg Qz\}, \{\neg Pa\}\}$.\\

\noindent Now the clause $\{Px\vee Rzb\}$ associated with $[y/z]$ and
$\{\neg Pa\}$ associated with $[y/z]$ contain a pair of complementary
literals.  The consensus between $\{Px\vee Rzb\}$ and $\{\neg Pa\}$
with respect to the substitution (mgu) $[y/z,x/a]$ is $Rzb$. Again we
can not take consensus between $\{Px\vee\neg Qz\}$ associated with
$[y/f(x)]$ and $\{\neg Pa\}$ associated with $[y/z]$ as composition of
substitution is not well defined. Now $\Sg=\{\{\neg Pa\}, \{Rzb\}\}$.
We now add the new clause to $\eta_2$ to get $L(\eta_2)$.\\

\noindent $L(\eta_2)=\{\{Qy\},\{\neg Rf(x)b\}, \{Px\vee Rzb\},
\{Px\vee\neg Qz\}, \{\neg Pa, Rzb\}\}$.\\

\noindent  As $\{Rzb\}$ subsumes $\{Px\vee Rzb\}$, the residue becomes\\

\noindent $\eta_3=Res(L(\eta_2))=\{\{Qy\},\{\neg Rf(x)b\},
\{Px\vee\neg Qz\}, \{\neg Pa\}, \{Rzb\}\}$.\\

\noindent We cannot take any more consensus among the clauses of
$\eta_3$ as the composition of substitution is not well defined
between clauses. So $\eta_3 = L(\eta_3)=\eta_4$.That is,\\

\noindent $\pi(X\cup C)=\pi(\pi(X)\cup C)= \{\{Qy\},\{\neg Rf(x)b\},
\{Px\vee \neg Qz\}, \{\neg Pa\}, \{Rzb\}\}$.












\section{Conclusion}

  In this paper, we have suggested an incremental algorithm to compute
  the set of prime implicates of a knowledge base $X$ and a clause
  $C$. We have also proved the correctness of the algorithm. In
  Example \ref{exp}, when new clauses or clause sets are added,
  computation of the prime implicates uses the primeness of the
  already computed clauses. The algorithm adds one clause at a time
  and compiles the enhanced knowledge base. A simple modification of
  the algorithm can be made to accomodate a set of clauses instead of
  one by putting {\it INCRPI} inside an iterative loop.

If we compute the prime implicates of $X\cup C$ directly by using the
algorithm from \cite{Manoj}, we obtain the same prime implicates,
though it involves wasteful computations. Efficiency of the proposed
algorithm {\it INCRPI} results in exploiting the properties of $\pi(X)$
instead of using $X$ directly.

\end{document}